\DeclareMathOperator{\size}{size}
\DeclareMathOperator{\qdeg}{qdeg}
\DeclareMathOperator{\std}{std}
\renewcommand{\red}{{}}
\definecolor{babyblue}{rgb}{0.54, 0.81, 0.94}
\definecolor{lavenderblue}{rgb}{0.8, 0.8, 1.0}
\begin{document}

\title{Quantum Circuit Cutting for Classical Shadows}

\author[1]{Daniel T. Chen}
\author[1]{Zain H. Saleem}
\author[2]{Michael A. Perlin}
\affil[1]{Mathematics and Computer Science Division, Argonne National Laboratory}
\affil[2]{Infleqtion, Inc.}

\date{}

\maketitle

\begin{abstract}
Classical shadow tomography is a sample-efficient technique for characterizing quantum systems and predicting many of their properties. Circuit cutting is a technique for dividing large quantum circuits into smaller fragments that can be executed more robustly using fewer quantum resources. We introduce a divide-and-conquer circuit cutting method for estimating the expectation values of observables using classical shadows. We derive a general formula for making predictions using the classical shadows of circuit fragments from arbitrarily cut circuits\red{, and provide the sample complexity analysis for the case when observables factorize across fragments}. Then, we numerically show that our divide-and-conquer method outperforms traditional uncut shadow tomography when estimating high-weight observables that act non-trivially on many qubits, and discuss the mechanisms for this advantage.
\end{abstract}

\section{Introduction}

The complexity of fully describing a quantum system grows exponentially with its size.
However, it may be considerably easier to estimate certain properties of a quantum state.
To leverage this idea, Aaronson introduced the concept of \textit{shadow tomography} \cite{aaronson2019shadow}.
Aaronson proved that some functions of a quantum state, namely expectation values of a fixed but arbitrary set of observables, can be estimated arbitrarily well using only polynomially many copies of the quantum state.
However, Aaronson's procedure required capabilities that are well out of reach for near-term experiments and quantum computing devices.
Inspired by the idea of estimating properties rather than states themselves, Huang, Kueng, and Preskill recast Aaronson's method into \textit{classical shadow tomography}, which retains the original method's favorable complexity while alleviating hardware constraints \cite{huang2020predicting, elben2022randomized, huang2022learning}.
On a high level, Huang \textit{et al.}'s shadow tomography builds a classical model of a quantum state, or a \textit{classical shadow}, that can be used to estimate the desired observables.

Classical shadow tomography, including its randomized \cite{huang2020predicting, elben2022randomized, huang2022learning}, derandomized \cite{huang2021efficient, stricker2022experimental}, and Bayesian \cite{lukens2021bayesian} variants, has subsequently received a great deal of attention.
Classical shadows have been shown to be beneficial for machine learning \cite{huang2022foundations, huang2022quantum}, avoiding barren plateaus \cite{sack2022avoiding}, estimating fermionic states \cite{zhao2021fermionic}, and mitigating errors \cite{seif2022shadow}.
Classical shadows are also provably robust to various types of noise \cite{huang2022foundations, koh2022classical}, which demonstrates compatibility with noisy near-term quantum computing devices.
The channel-state duality also allows the techniques of classical shadows to be used for efficient quantum process tomography \cite{kunjummen2021shadow, levy2021classical}.

Meanwhile, inspired by fragmentation methods for molecular simulation \cite{warshel1976theoretical, gordon2012fragmentation, li2007generalized, li2008fragmentation}, Peng \textit{et al.} recently introduced the idea of \textit{circuit cutting} \cite{peng2020simulating}.
Circuit cutting divides a large quantum circuit into smaller subcircuits, or \textit{fragments}.
These fragments can be simulated independently and subsequently recombined by classically post-processing their measurement results.
Circuit cutting can thereby reduce the number of qubits required to simulate a quantum circuit.
This technique is also sometimes referred to as \textit{circuit knitting}.

Generally speaking, circuit cutting comes at the cost of classical post-processing overheads that are exponential in the number of cuts that are made to a circuit.
However, these overheads might be avoided with suitably-designed algorithms that use circuit cutting for distributed quantum combinatorial optimization \cite{saleem2021quantum}.
Moveover, recent studies have demonstrated that circuit cutting allows quantum circuits to be more reliably executed, and that circuit cutting can be beneficial even if a quantum device has enough qubits to simulate the entire circuit \cite{ayral2020quantum, ayral2021quantum, perlin2021quantum}.
More elaborate circuit cutting procedures have also been proposed. For example, maximum likelihood methods have been applied to mitigate the corrupting effect of hardware errors and shot noise on recombined circuit outputs \cite{perlin2021quantum}.
Circuit cutting can also be sped up by incorporating randomized measurements \cite{lowe2022fast} or stochastic classical postprocessing \cite{chen2022approximate}.
Other work has proposed to fragment a circuit by cutting gates rather than qubit wires \cite{piveteau2022circuit}. 

In this work, we introduce a divide-and-conquer circuit cutting method for estimating the expectation values of observables from the classical shadows of circuit fragments.
With the circuit-cutting formalism, we express the expectation values at the end of a circuit in terms of the classical shadows of its fragments.
This formalism inherits the flexibility and low sample complexity of classical shadows, while allowing for the simulation of smaller quantum circuits.
We further show that circuit cutting enables estimating higher-weight observables (that is, observables that act non-trivially on more qubits) with a fixed number of measurement samples.
In total, we find that if a quantum circuit is amenable to cutting, doing so increases the weight of the observables that can be accurately estimated using classical shadow tomography.

The organization of the paper is as follows. We review circuit cutting in Section \ref{sec:circuit-cut}, and derive a general formula for using fragment data to estimate expectation values with respect to the output state of a fragmented circuit. In Section \ref{sec:shadow-tomography}, we review shadow tomography, a derandomized variant thereof, and the use of classical shadows for quantum process tomography.
Section \ref{sec:divide-and-conquer} combines circuit cutting and shadow tomography, providing theoretical lower bounds on the sample complexity of estimating observables using the classical shadows of fragments, as well as numerical results comparing shadow tomography with and without fragmentation.

\subsection{Notational Remarks}
Since the paper will be concerned with quantum circuits, we will uniformly index each qubit wire starting from $1$. 
Numerical subscripts will be reserved for actions on qubits specified by an index. 
For example, $U_{12}$ will be a quantum gate $U$ acting on qubits $1$ and $2$. 
Similarly, we will use numerical indices to denote partial traces, such that $\tr_1(\rho)$ denotes the partial trace of $\rho$ with respect to qubit 1.

For an operator $A$, $A^\intercal$ will be used to denote the matrix transpose and $A^\dagger$ will be used to denote its adjoint operator (conjugate transpose).
Sets are usually denoted with script-characters like $\mathcal S$ or greek letters. 
Complements will be denoted by $\overline{\mathcal S}$. 
However, if set complements ever occurs in subscripts, slashes will be used for cleaner typesetting, \textit{e.g.,} $A_{\not{\mathcal S}}$. 
Moreover, for a positive integer $N$, the set $[N]$ is the set that enumerates from $1$ to $N$, \textit{i.e.,}~$[N] = \{1, 2, \dots, N\}$. 

\section{Quantum circuit cutting} \label{sec:circuit-cut}
Consider three-qubit quantum state prepared by a circuit $\rho = V_{23} U_{12} \ket{000}\bra{000} U_{12}^\dagger V_{23}^\dagger$ for some arbitrary two-qubit quantum gates $U$ and $V$ respectively acting on qubits $1,2$ and $2,3$. 
Let $\mathcal B = \{I, X, Y, Z\}$ be the set of Pauli operators including the identity.
The orthogonality of $\mathcal{B}$ with respect to the trace inner product implies that $\rho$ can be expanded into a sum of tensor products as \cite{peng2020simulating, perlin2021quantum}
\begin{align}\label{eq:cut-example}
    \rho &= \frac{1}{2} \sum_{M \in \mathcal B} \tr_2 \left(M_2 U_{12} \ket{00} \bra{00} U_{12}^\dagger \right) \tensor V_{23} (M_2 \tensor \ket{0}\bra{0}) V_{23}^\dagger,
\end{align}
where $M_2$ denotes the action of $M$ on qubit 2.
The two tensor factors inside the summation essentially {(almost; clarified below)} form two quantum states that can be prepared independently.
Thus, we say that the quantum circuit can be \textit{cut} into two \textit{fragments}, $f_1$ and $f_2$.
The state of each fragment is parameterized by the operator $M \in \mathcal B$, satisfying:
\begin{align}
    \rho_{f_1}(M) &= \tr_2 \left(M_2 U_{12} \ket{00} \bra{00} U_{12}^\dagger \right), \\
    \rho_{f_2}(M) &= V_{23} (M_2 \tensor \ket{0}\bra{0}) V_{23}^\dagger.
\end{align}
Careful readers should note that $\rho_{f_i}(M)$ is not actually a ``state'' since $M$ is a (possibly traceless) operator rather than a quantum state. To resolve this concern, we interpret $\rho_{f_j}$ as a linear combination of \textit{conditional states} by expanding $M = r M\ind r + s M \ind s$, where $(r, s) = \lambda(M)$ are the eigenvalues of $M$, and $M\ind j$ is a rank-1 projector. Then, we complete the conditional-state interpretation of Eq.~\eqref{eq:cut-example} with the expansion
\begin{align}\label{eq:cut-example-eig}
    \rho = \frac{1}{2} \sum_{\substack{M \in \mathcal B, \\ r,s \in \lambda(M)}} rs \cdot  \rho_{f_1}\left( M \ind{r} \right) \tensor \rho_{f_2} \left( M \ind{s} \right).
\end{align}
In the following text, we will not make the distinction between Eqs.~\eqref{eq:cut-example} and \eqref{eq:cut-example-eig} for notational cleanliness.

One can think of circuit cutting as performing tomography at the cut locations. 
We will call the qubit wire directly upstream of the cut a \textit{quantum output}, which is connected to a \textit{quantum input} directly downstream of the cut. 
Circuit cutting is performed by inferring all quantum degrees of freedom at the cut locations---summing over an informational-complete (IC) set of measurement bases at quantum outputs, and over IC set of states (such as the SIC-POVM states \cite{renes2004symmetric}) at quantum inputs. 
On the other hand, the \textit{circuit output} of a fragment are the qubit wires that were at the end of the uncut circuit. 
The ultimate goal is to combine data collected from fragments in a way that reproduces desired properties of the full, uncut circuit.

In this paper, we focus on reconstructing the expectation value of an observable $O$ with respect to the state $\rho$ prepared by a circuit, \textit{i.e.,}~$\tr(O \rho)$. 
For concreteness, suppose $O$ admits the decomposition $O = O_1 \tensor O_{23}$, and that the circuit preparing $\rho$ is amenable to cutting as in Eq.~\eqref{eq:cut-example}.
Then, we can write the expectation value as a sum of products of expectations with respect to operators that are local to each fragment:
\begin{align}
    \tr(O \rho) &= \frac{1}{2} \sum_{M \in \mathcal B} \tr \left[ (O_1 \tensor O_{23}) \cdot (\rho_{f_1}(M) \tensor \rho_{f_2}(M) )\right] \\
    &= \frac{1}{2} \sum_{M} \tr [O_1 \rho_{f_1}(M)] \cdot \tr[O_{23} \rho_{f_2}(M)].
    \label{eq:toycut}
\end{align}

We can reformulate the above equation in a way that abandons the distinction between initialization and measurements.
Each fragment can be viewed as a quantum channel that maps a  state at the quantum input of a fragment to a state at all outputs (which includes both the quantum and circuit outputs).
{
By the channel-state duality \cite{de1967linear, jamiolkowski1972linear, choi1975completely}, we can write the quantum channel as a tripartite \textit{Choi state} $\Lambda \in \mathcal L(\mathcal H^{Q_i} \tensor \mathcal H^{Q_o} \tensor \mathcal H^{C_o})$, where $Q_i$, $Q_o$, and $C_o$ are respectively the number of quantum inputs, quantum outputs, and circuit outputs on a fragment, $\mathcal H^N$ denotes the Hilbert space of $N$ qubits, and $\mathcal L(A)$ denotes the space of linear operators on $A$.
{Specifically, the Choi state $\Lambda$ of a fragment $f$ with $Q_i$ quantum inputs is the density matrix obtained by preparing $Q_i$ Bell pairs (\red{$(\ket{00} + \ket{11})/\sqrt{2}$}), and inserting one qubit from each of the Bell pairs ($Q_i$ qubits total) into the quantum input of fragment $f$.}
In terms of these Choi states, the expectation value of an operator $O$ can be written as the following:
\begin{align}
    \tr(O\rho) = \sum_{M} \tr (({1}\otimes M \tensor O_1)\Lambda_{f_1}) \cdot \tr((M^\intercal \tensor {1} \tensor O_{23}) \Lambda_{f_2}),
    \label{eq:cut-example-choi}
\end{align}
where for clarity we include an explicit $1$ at a tensor factor of $\mathcal L(\mathcal H^{Q_i} \tensor \mathcal H^{Q_o} \tensor \mathcal H^{C_o})$ that addresses a 1-dimensional Hilbert space, which occurs when one of $Q_i$, $Q_o$, or $C_o$ is empty (that is, when a fragment has no quantum inputs, quantum outputs, or circuit outputs).
Henceforth omitting dimension-1 tensor factors for brevity, the Choi states in the above example are
\begin{align}
  \Lambda_{f_1} = U \ket{00}\bra{00} U^\dagger,
  &&
  \Lambda_{f_2} = (I \tensor V) (\ket\Phi\bra\Phi \tensor \ket0\bra0) (I \tensor V^\dagger),
\end{align}
where $I$ is the single-qubit identity operator, and $\ket\Phi = (\ket{00} + \ket{11})/\sqrt{2}$.}

Note the distinction between a Choi \textit{matrix} and a Choi \textit{state}. 
Here, we invoke the state interpretation, which requires a normalization factor that scales with the dimension of the quantum input. 
Thus, the factor of $1/2$ that we see in Eq.~\eqref{eq:toycut} is absorbed into $\Lambda_{f_2}$ {(buried in $\ket\Phi\bra\Phi$)}, and no normalization is needed for $\Lambda_{f_1}$ because it lacks quantum inputs.
{Altogether, no factors of 1/2 are necessary when expressing expectation values in terms of Choi states, as in Eq.~\eqref{eq:cut-example-choi}, because every factor of 1/2 comes from the resolution of the identity operator at a cut, and the same factor of 1/2 is accounted for by the normalization of the Choi state with a quantum input at the same cut.}


\subsection{Generalizing circuit cutting}\label{sec:gencirccut}


A circuit can be cut at arbitrary locations so long as it forms distinct fragments that can be operated on. 
Tang \textit{et al.}~proposed methods for optimizing the cut location based on post-processing complexity \cite{tang2021cutqc}. 
We will consider the case where the cut locations have already been decided.
Fragments and their connectivity can then be represented by a directed graph with fragments as vertices and qubit wires connecting fragments as edges.

More formally, an $N_q$-qubit quantum circuit is split up into fragments $ F = \{f_i\}_{i=1}^{N_f}$ with $K$ cuts. 
The collection of fragments forms a directed multigraph $G = (F, E)$ where each edge $(f_i, f_j, \ell) \in E$ represents a cut on qubit $\ell$ separating fragment $f_i$ from fragment $f_j$.
For each fragment $f_i$, there is an associated conditional quantum state $\rho_{f_i}$ (of its circuit outputs) that depends on some auxiliary operators $M \in \mathcal B$ (at its quantum inputs and outputs).
These operators specify state preparation (quantum input) and measurement (quantum output) routines for preparing the conditional state $\rho_{f_i}$.
Lastly, $\mathcal Q_i(f_i), \mathcal Q_o(f_i), \mathcal C_o(f_i)$ denote the number of quantum inputs, quantum outputs, and circuit outputs associated with fragment $f_i$.

Given an arbitrary circuit split into $N_f$ fragments $\{f_i\}$ with $K$ cuts, the quantum state prepared by the circuit can be written (up to a permutation of qubit tensor factors) as
\begin{align}
    \rho = \frac{1}{2^K} \sum_{\bm M \in \mathcal B^K} \bigotimes_{i=1}^{N_f} \rho_{f_i}(M_{f_i}),
\end{align}
where $M_{f_i}$ denotes the operators in $\bm M$ that are associated with the cuts incident to fragment $f_i$. 
So, $M_{f_i}$ is a tuple of $\mathcal Q_i(f_i) +  \mathcal Q_o(f_i)$ operators in $\mathcal{B}$. 
It is also important to note that we will allow fragments with no circuit outputs, in which case $\rho_{f_i}(M_{f_i})$ is simply a scalar.

\begin{figure*}
\centering
\begin{subfigure}{0.48\linewidth}
\includegraphics[width=\linewidth]{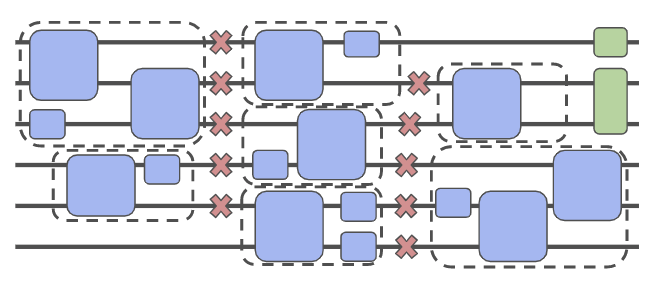}
\caption{Above is an example of circuit cutting. Blue boxes represent arbitrary quantum gates, red crosses represent cut locations, and dashed lines outline individual fragments. The two green boxes at the end of the circuit represent observables that we wish to find the expectation of.}
\end{subfigure}
\hfill
\begin{subfigure}{0.48\linewidth}
\includegraphics[width=\linewidth]{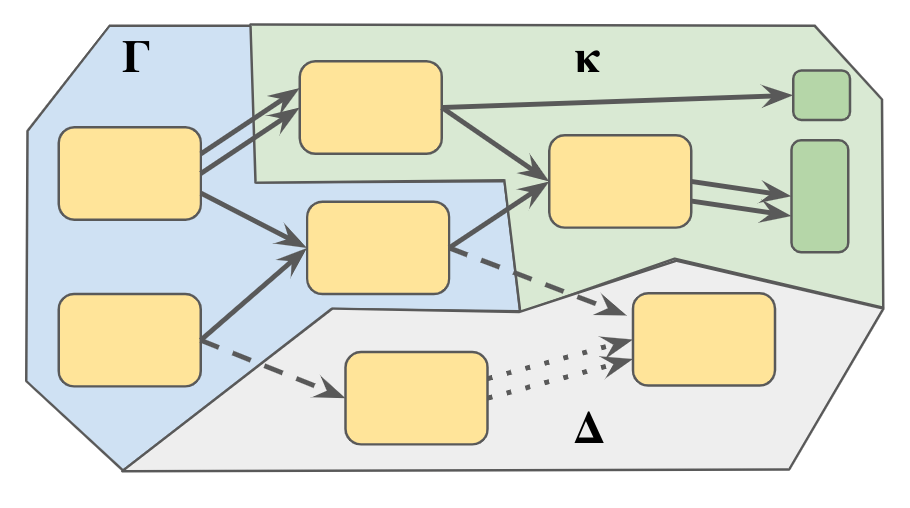}
\caption{Given the circuit and cut on the left, we can construct a fragment graph as depicted above. Each yellow box corresponds to one fragment. Each shaded region represents one part in the fragment graph partition. The green region ($\kappa$) denotes the fragments directly acted on by the observable. The blue region ($\Gamma$) denotes the fragment that are upstream of $\kappa$. The grey region denotes the fragments in $\Delta$, which does not contribute to the expectation value of the desired observable.}
\end{subfigure}
\caption{An example of circuit cutting and its mapping to a  fragment graph. {Note that the fragment graph in the example above is acyclic, but this property is not required in general.  We will later consider structured circuit ansatz whose fragment graph contains cycles.}}
\label{fig:graph_partition}
\end{figure*}

Consider an operator $O$ that is local to a subset $\mathcal K \subset [N_q]$ of the qubits that are addressed by a circuit. 
Let $\kappa$ be the set of fragment \red{labels such that fragment $f_k$ for $k \in \kappa$} have circuit outputs on the qubits in $\mathcal K$. 
Furthermore, let $O \ind \kappa$ be the restriction of $O$ onto the circuit outputs of fragments in $\kappa$; by definition, $O$ acts trivially on all other fragments.
We can separate expectation values of $O$ into factors involving the fragments that are addressed by $O$, namely $\kappa$, and all other fragments, \red{$\overline \kappa = [N_f] \setminus \kappa$}:
\begin{align}
    \tr(O\rho) &= \frac{1}{2^K} \sum_{\bm M \in \mathcal B^K} \tr\left( O \ind \kappa \bigotimes_{\red{k \in \kappa}} \rho_{f_k}(M_{f_k})\right) \cdot \prod_{\red{i \in \overline \kappa}} \tr \left( \rho_{f_i}(M_{f_i}) \right).
    \label{eq:cut_fragments_full}
\end{align}
{Note that we do not assume that $O$\red{, and by extension, $O \ind \kappa$,} factorizes across fragments.}
It is important to note that these expressions assume that tensor factors are permuted to the appropriate ordering, which will be an implicit assumption throughout this work.

We now seek to eliminate the dependence of $\tr(O\rho)$ on fragments that are outside the past light cone of $O$.
To this end, we further partition the set $\overline{\kappa}$, as visualized in Figure \ref{fig:graph_partition}.
Let $\Gamma$ be the set of fragments strictly upstream of $\kappa$ in $G$ (not including fragments in $\kappa$), and let \red{$\Delta = [N_f] \setminus(\kappa\cup\Gamma)$} be the set of all remaining fragments in $G$.
{That is, $\Delta$ is the set of fragments strictly outside the past light cone of $O$.
By noting that the expectation value $\tr(O\rho)$ is identical to that with respect to the state prepared by a circuit in which all fragments in $\Delta$ are eliminated, we can simplify}
\begin{align}
    \tr(O\rho) = \frac{1}{2^{|E_{\not \Delta}|}} \sum_{\bm M \in \mathcal B^{|E_{\not \Delta|}}} \tr \left( O \ind \kappa \bigotimes_{\red{k \in \kappa}} \rho_{f_k}(M_{f_k}) \right) \cdot \prod_{\red{i \in \Gamma}} \tr(\rho_{f_i} (M_{f_i})),
    \label{eq:reduction}
\end{align}
{where $E_{\not\Delta}$ is the set of edges between the fragments in past light cone of $O$, or the complement $\overline\Delta =\kappa\cup\Gamma$.
Note that Eq.~\eqref{eq:reduction} implicitly assigns identity operators to the quantum outputs of $\rho_{f_k}$ and $\rho_{f_i}$ that are associated with edges between $\Delta$ and $\overline\Delta$, which is equivalent to simply tracing over these quantum outputs.
}

The above expression assumes that the $O\ind\kappa$ can be measured directly, which may not always be the case.
Expanding $O \ind \kappa$ in the basis of Pauli strings as $O \ind \kappa = \sum_{P \in \mathcal{B}^{\tensor |\mathcal K|}} \alpha_P P$, we find that
\begin{align}
    \tr(O \rho)
    &= \sum_{P \in \{I,X,Y,Z\}^{\tensor |\mathcal K|}} \frac{\alpha_P}{2^{|E_{\not \Delta}|}} \sum_{\bm M \in \mathcal B^{|E_{\not \Delta}|}} \prod_{\red{k \in \kappa}} \tr \left( P_{\mathcal C_o(f_k)} \rho_{f_k}(M_{f_k}) \right) \cdot \prod_{\red{i \in \Gamma}} \tr(\rho_{f_i} (M_{f_i})),\label{eq:gencutstate}
\end{align}
where $P_{\mathcal C_o(f_k)}$ is the restriction of $P$ to the circuit outputs of fragment $f_k$.
The choice of Pauli-string expansion is for generality. Practically, the cost of obtaining the Pauli string expansion grows exponentially with respect to the size of $O$. 
One can employ alternative expansions so long as the decomposed observables are separable with respect to the circuit outputs of the fragments. 

We can further apply the channel-state duality to each factor of the expansion in Eq.~\eqref{eq:gencutstate}. 
Let $M_{f_i}^{Q_i}$ be a tensor product of operators at the quantum inputs of fragment $f_i$, and similarly $M_{f_i}^{Q_o}$ for the quantum outputs. 
Then, we can write the above summation in terms of Choi states as
\begin{multline}
    \tr(O \rho) = \sum_{P \in \{I,X,Y,Z\}^{\tensor |\mathcal K|}} \alpha_P \sum_{\bm M \in \mathcal B^{|E_{\not \Delta}|}} \prod_{\red{k \in \kappa}} \tr \left(\left( {M_{f_k}^{Q_i}}^\intercal \tensor M_{f_k}^{Q_o} \tensor P_{\mathcal C_o(f_k)} \right) \Lambda_{f_k} \right) \\ 
    \cdot \prod_{\red{i \in \Gamma}} \tr \left( \left( {M_{f_i}^{Q_i}}^\intercal \tensor M_{f_i}^{Q_o} \tensor I_{\mathcal C_o(f_i)} \right) \Lambda_{f_i} \right),
    \label{eq:gencutchoi}
\end{multline}
where $I_{\mathcal C_o(f_i)}$ is the identity operator on all circuit outputs of fragment $f_i$.
As in Eq.~\eqref{eq:cut-example-choi}, normalizing factors of $2$ get absorbed into Choi states, since there are exactly $|E_{\not \Delta}|$ quantum inputs between the fragments in $\kappa\cup\Gamma$.

To summarize, we presented a general formula for finding the expectation of an observable in terms of expectations of operators on individual fragments.
We can see that the formula sums over exponentially many terms with respect to the number of cuts. 
This is generally unavoidable without making further assumptions on a specific circuit ansatz. 
For an observable that is local to certain qubits, only the fragments that are directly acted on and the fragments upstream of those (that is, fragments in the past light cone of the observable) need to be considered. 
This decreases the exponential factor by a constant, which would correspond to an improvement in computational efficiency in practice.

\section{Shadow tomography for states and processes}
\label{sec:shadow-tomography}

Quantum tomography is the process reconstructing an unknown quantum state from measurements. 
The traditional method requires exponentially many measurements \cite{o2016efficient}. 
Though expensive, tomography is useful when the quantity to be estimated is not known a priori.
\textit{Classical shadow tomography} offers a computationally efficient way of estimating polynomially many observables with sub-exponential sample complexity \cite{huang2020predicting}. 
Intuitively, shadow tomography aims at reconstructing properties of a quantum state by taking ``snapshots'' in random bases. 

Formally, we shall begin by defining the \textit{measurement primitive} $\mathcal U$---the set of measurement bases to sample from. 
The measurement primitive can be arbitrary so long as it is \textit{tomographically complete}, \textit{i.e.,}~for all quantum states $\sigma \neq \rho$, there exists a $U \in \mathcal U$ and $\ket{b}$ such that
\begin{align}
    \bra b U \sigma U^\dagger \ket b \neq \bra b U \rho U^\dagger \ket b
\end{align}
In this paper, we will employ the \textit{random Pauli basis} measurement primitive:
\begin{align}
    \mathcal U = {\langle H, S \rangle^{\tensor n}} = \text{Cl}(2)^{\tensor n}
\end{align}
{where $H$ is the Hadamard gate and $S = e^{-i\pi Z/4}$ is the S gate, which together generate the group} of Clifford gates $\text{Cl}(2)$.
{Here $\mathcal{S}^{\otimes n}$ is the set of $n$-fold tensor products of elements of the set $\mathcal{S}$.}
Equivalently, one can think of this measurement primitive as measuring {each qubit separately} in a randomly-chosen Pauli basis. 
This measurement primitive has the property that the number of measurements needed to estimate expectation values of observables will scale exponentially only with the number of qubits that the observable acts on non-trivially, which we define as the size, or weight, of an observable $O$, and denote $\size(O)$. 

The general shadow tomography procedure, taken from \cite{huang2020predicting}, is as follows. After executing a circuit that prepares the state $\rho$, we pick $U \in \mathcal U$ uniformly at random, rotate $\rho \mt U \rho U^\dagger$, and measure all qubits in the computational basis. Suppose we get the measurement result $\ket{b}$. Then, we define a new operator $\mathcal M$ by
\begin{align}
    \mathcal M(\rho) = \E_{U \sim \mathcal U} \sum_{b \in \{0,1\}^n} U^\dagger \ket b \bra b U \bra b U \rho U^\dagger \ket b.
\end{align}
From here, we build an estimator for $\rho$ by inverting $\mathcal M$,
\begin{align}
    \hat \rho = \mathcal M\inv \left( U^\dagger ~ \ket{\hat b}\bra{\hat b} U \right),
\end{align}
where the inverse of $\mathcal M$ depends on the measurement primitive $\mathcal U$, and $\hat\rho$ is equal to $\rho$ in expectation: $\E \hat \rho = \rho$.
This inverse is well-defined because $\mathcal U$ is tomographically complete.
This estimator is the \textit{classical shadow} of $\rho$, and generally speaking a classical shadow can be built by averaging over many such estimators. 
For more robust estimation, as suggested in Ref.~\cite{huang2020predicting}, one could perform median-of-means---separating the shadow into groups of single-shot estimators, finding the mean estimate of each group, and taking the median of those means. Below we will give a theorem for the sample complexity of estimating linear functions. 

\begin{lemma}[Theorem \red{S1 in the supplementary material of} Ref.~\cite{huang2020predicting}]\label{thm:shadowtomo}For the random Pauli measurement primitive $\mathcal U$, collection of Hermitian $O_1, \dots, O_M \in \C^{2^n \times 2^n}$, error $\epsilon, \delta \in (0,1)$. Let
\begin{align}
    K = 2\log \frac{2M}{\delta},
    &&
    N = \frac{34}{\epsilon^2} \max_{i \in [M]} 4^{\size(O_i)} \|O_i\|_{\infty}^2.
\end{align}
Then, a collection of $NK$ independent classical shadows builds the estimator $\hat o_i$ for the expectation of $O_i$ satisfying
\begin{align}
    \vert \hat o_i - \tr(O_i \rho) \vert < \epsilon
\end{align}
for all $i \in [M]$ with probability at least $1-\delta$.
\end{lemma}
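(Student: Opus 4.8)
The statement is Theorem~2 of Ref.~\cite{huang2020predicting}, and I would follow that proof, which has three parts: (i) a single-shot estimator that is unbiased, (ii) a variance bound governed by $\size(O_i)$, and (iii) a median-of-means boosting step closed off with a union bound over $i\in[M]$. Fix one observable $O=O_i$ with $k=\size(O)$, write $\mu=\tr(O\rho)$, and let $\hat o=\tr(O\hat\rho)$ with $\hat\rho=\mathcal M\inv(U^\dagger\ket{b}\bra{b}U)$ a single classical shadow. Unbiasedness, $\E\,\hat o=\mu$, is immediate from $\E\,\hat\rho=\rho$ and linearity; the inverse $\mathcal M\inv$ exists because $\mathcal U$ is tomographically complete, and for the random Pauli primitive it factorizes over qubits as $\mathcal M\inv=(\mathcal M_1\inv)^{\tensor n}$ with $\mathcal M_1\inv(\sigma)=3\sigma-\tr(\sigma)I$. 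Since $\hat\rho$ then has unit trace, replacing $O$ by its traceless part shifts $\hat o$ and $\mu$ by a common deterministic constant, so I may assume $O$ traceless when controlling fluctuations.

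The technical core is the bound $\mathrm{Var}[\hat o]\le 4^{k}\,\|O\|_\infty^2$. I would compute the second moment $\E[\hat o^2]=\E_{U\sim\mathcal U}\sum_b\bra{b}U\rho U^\dagger\ket{b}\,\big(\tr[O\,\mathcal M\inv(U^\dagger\ket{b}\bra{b}U)]\big)^2$ and exploit two facts: the tensor-product form of $\mathcal M\inv$ lets the factors attached to qubits outside the support of $O$ be summed out against $\sum_b\ket{b}\bra{b}=I$ together with the uniform choice of local Pauli basis, leaving only the $k$ qubits on which $O$ acts nontrivially; and on those $k$ qubits, expanding $O$ in the local Pauli basis reduces the estimate to the shadow norm of Pauli strings, which for this ensemble grows only with the support size of each string. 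Tracking the bookkeeping gives the stated $4^k\|O\|_\infty^2$ bound (this is essentially the content of Theorem~1 of Ref.~\cite{huang2020predicting}). I expect this moment computation to be the main obstacle — everything else is soft — because one must handle the cross terms in the Pauli expansion carefully and keep the exponential dependence on $k$ rather than on $n$.

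Finally, partition the $NK$ shadows into $K$ disjoint groups of $N$, let $\bar o^{(1)},\dots,\bar o^{(K)}$ be the within-group sample means, and let $\hat o$ be their median. Each $\bar o^{(j)}$ is unbiased with variance $\mathrm{Var}[\hat o]/N\le 4^k\|O\|_\infty^2/N\le\epsilon^2/34$ by the choice of $N$, so Chebyshev gives $\Pr[\,|\bar o^{(j)}-\mu|\ge\epsilon\,]\le 1/34$. The median can deviate by $\ge\epsilon$ only if at least half of the groups do, and a Chernoff bound on that binomial tail gives $\Pr[\,|\hat o-\mu|\ge\epsilon\,]\le e^{-K/2}$, which equals $\delta/(2M)$ for $K=2\log(2M/\delta)$. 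Reinstating the index $i$ and taking a union bound over $i\in[M]$ bounds the probability that some estimator $\hat o_i$ differs from $\tr(O_i\rho)$ by $\epsilon$ or more by $M\cdot\delta/(2M)\le\delta$, which is the claim.
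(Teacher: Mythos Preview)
The paper does not supply its own proof of this lemma: it is quoted verbatim as Theorem~2 of Ref.~\cite{huang2020predicting} and used as a black box, so there is nothing in the present paper to compare against. Your outline is exactly the argument of the cited reference --- unbiasedness of the single-shot shadow estimator, the shadow-norm (variance) bound $4^{\size(O)}\|O\|_\infty^2$ for the random Pauli primitive, and a median-of-means/union-bound wrap-up --- and it is correct at the level of a sketch. The one place to be careful is the constant in the tail bound for the median: the inequality $\Pr[\,|\hat o-\mu|\ge\epsilon\,]\le e^{-K/2}$ does not drop out of a generic Chernoff/Hoeffding bound with failure probability $1/34$ per group without a short computation, so if you write this up you should either cite the specific median-of-means lemma used in Ref.~\cite{huang2020predicting} or carry the binomial-tail estimate through explicitly.
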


To obtain estimates of polynomially many observables, one would need only logarithmically growing numbers of measurements. 
Meanwhile, the number of measurements respects locality (which here means the number of qubits addressed by an operator, and has nothing to do with locality in physical space). 
For observables that are local to a few qubits, relatively few samples will be needed. 
Lastly, the choice of observables is not required a priori so long as there is an estimate for the maximum locality of observables.

\subsection{Implementation of shadow tomography}\label{sec:st-code}
Here we present an implementation that is equivalent to shadow tomography scheme above \cite{bergholm2018pennylane, huang2021efficient}. 
Suppose we want to estimate the expectation of a Pauli observable $O$. 
The idea is to generate an ensemble of bitstrings measured in randomly chosen Pauli basis and collect only those that matches $O$ at non-trivial qubits.
The estimator is then generated by averaging over the bitstrings where $O$ is directly observed. 

Procedurally, for an $n$-qubit circuit we first generate an ensemble of length-$n$ \textit{trit-strings} $s_1, s_2, \dots, s_S$, where each $s_j\in\{X, Y, Z\}^n$ specifies a measurement basis for each qubit. 
For each trit-string $s_j$, we run the desired circuit and measure in basis specified by the string, obtaining a bitstring $m_j\in\{+1,-1\}^n$ that specifies the measurement outcome for each qubit.
The collection of all trit-strings and corresponding measurement outcomes, $\{(s_j,m_j)\}_{j=1}^S$, is the classical shadow of the circuit. 
To estimate the expectation value of a Pauli observable $O$ that acts non-trivially on qubits $\mathcal{K}$, we find a corresponding trit-string $s_j$ that matches $O$ at qubits $\mathcal K$, and take a product over the associated measurement outcomes in $m_j$ at $\mathcal K$ for a single-shot estimate of $O$. 
Lastly, we take an average of these single-shot estimates to get the final estimate for the expectation value of $O$.

For example, suppose there is a $3$-qubit circuit. 
First, we generate the bases in which to measure each qubit in each independent evaluation of the circuit, and we get 
\begin{align}
    (X,Y,X), (Z,Y,Y), (X,Z,Y), (X,Y,Z), (X,X,X).
\end{align}
Running the circuit yields an ensemble of bitstrings like the following:
\begin{align}
    (1,1,-1), (-1,-1,1), (-1,1,-1), (-1,1,1), (1,-1,1).
\end{align}
Suppose we want to estimate the expectation value of $O = XYI$. 
We then look for occurrences of $(X,Y,\cdot)$---corresponding to $XYI$---and collect only the parts of the classical shadow that matches this pattern. 
We end up with bitstrings $(1,1,-1)$ and $(-1,1,1)$. 
Again, only the measurement outcome of the first two qubits matter, so we get a final estimate of $((1 \times 1) + (-1 \times 1))/2 = 0$. 
On the other hand, if $O = IYI$, then our estimate becomes $(1 - 1 + 1)/3= 1/3$. 
Of course, there is a chance that the corresponding Pauli string is simply not observed, \textit{e.g.,} if $O = YXI$. 
In this case, there is simply no data that allows us to return an informed estimate and, as a default, we return the uniform prior on $O$, namely $\tr(O)=0$ (for nontrivial Pauli observables).
As we will see in Section \ref{sec:numerics}, this phenomenon is a significant drawback of shadow tomography, which can be greatly mitigated with circuit fragmentation.

\subsection{Shadow process tomography}

Process tomography refers to inferring a quantum channel, a completely-positive trace-preserving map between Hilbert spaces, from measurements.
For a fixed number of qubits, process tomography is in general exponentially more expensive than state tomography because we are required to map each input---a quantum state---to its output---another quantum state. 
By the channel-state duality, the problem of process tomography can be reduced to performing state tomography on the Choi state, from which we can directly apply the results from shadow tomography \cite{kunjummen2021shadow, levy2021classical}.

Suppose we wish to know about a quantum channel $\mathcal E$ with Choi state $\Lambda \in \C^{2^{m+n} \times 2^{m+n}}$ {where $m$ and $n$ are the number of inputs and outputs respectively of the channel}. 
Suppose $\rho$ is the input state and $\sigma$ is the output state, then by channel-state duality, {the expectation of the state through a quantum channel with respect to $\sigma$} can be thought of as performing a measurement on the Choi state,
\begin{align}
    {\tr(\mathcal E(\rho) \sigma)} = 2^m \tr\left( (\rho^\intercal \tensor \sigma) \Lambda \right).
\end{align}
Define the operator $O = \rho^\intercal \tensor \sigma$. 
Like in shadow tomogrpahy, we can have a collection of operators $O$, \textit{i.e.,}~$O_1, O_2, \dots, O_M$ each specifying an input-output relation we wish to predict. 


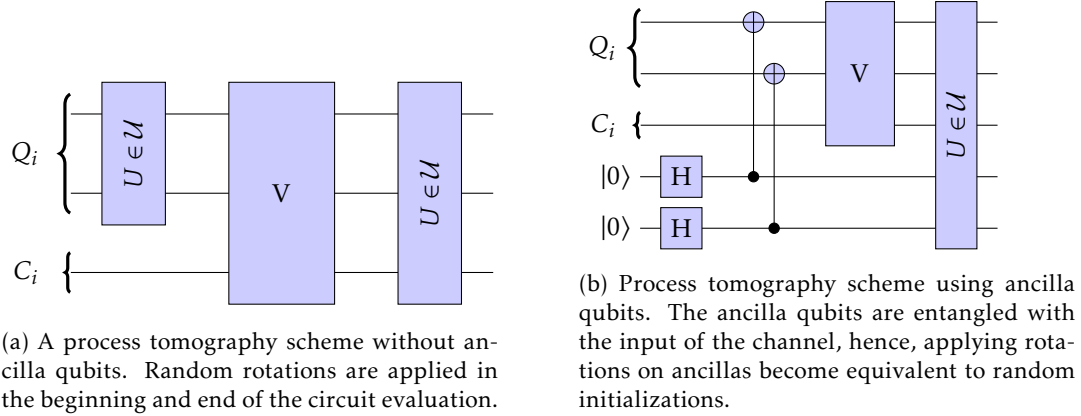
\begin{figure}
\centering
~\hfill
\begin{subfigure}{0.4\linewidth}
\begin{tikzpicture}[scale=2.00000,x=1pt,y=1pt]
\filldraw[color=white] (0.000000, -7.500000) rectangle (80.000000, 37.500000);
\draw[color=black] (0.000000,30.000000) -- (80.000000,30.000000);
\draw[color=black] (0.000000,15.000000) -- (80.000000,15.000000);
\filldraw[color=white,fill=white] (0.000000,11.250000) rectangle (-4.000000,33.750000);
\draw[decorate,decoration={brace,amplitude = 4.000000pt},very thick] (0.000000,11.250000) -- (0.000000,33.750000);
\draw[color=black] (-4.000000,22.500000) node[left] {$Q_i$};
\draw[color=black] (0.000000,0.000000) -- (80.000000,0.000000);
\filldraw[color=white,fill=white] (0.000000,-3.750000) rectangle (-4.000000,3.750000);
\draw[decorate,decoration={brace,amplitude = 1.875000pt},very thick] (0.000000,-3.750000) -- (0.000000,3.750000);
\draw[color=black] (-4.000000,0.000000) node[left] {$C_i$};
\draw (12.000000,30.000000) -- (12.000000,15.000000);
\begin{scope}
\draw[fill=lavenderblue] (12.000000, 22.500000) +(-45.000000:8.485281pt and 19.091883pt) -- +(45.000000:8.485281pt and 19.091883pt) -- +(135.000000:8.485281pt and 19.091883pt) -- +(225.000000:8.485281pt and 19.091883pt) -- cycle;
\clip (12.000000, 22.500000) +(-45.000000:8.485281pt and 19.091883pt) -- +(45.000000:8.485281pt and 19.091883pt) -- +(135.000000:8.485281pt and 19.091883pt) -- +(225.000000:8.485281pt and 19.091883pt) -- cycle;
\draw (12.000000, 22.500000) node {\rotatebox{90}{$U\in\mathcal{U}$}};
\end{scope}
\draw (40.000000,30.000000) -- (40.000000,0.000000);
\begin{scope}
\draw[fill=lavenderblue] (40.000000, 15.000000) +(-45.000000:14.142136pt and 29.698485pt) -- +(45.000000:14.142136pt and 29.698485pt) -- +(135.000000:14.142136pt and 29.698485pt) -- +(225.000000:14.142136pt and 29.698485pt) -- cycle;
\clip (40.000000, 15.000000) +(-45.000000:14.142136pt and 29.698485pt) -- +(45.000000:14.142136pt and 29.698485pt) -- +(135.000000:14.142136pt and 29.698485pt) -- +(225.000000:14.142136pt and 29.698485pt) -- cycle;
\draw (40.000000, 15.000000) node {{V}};
\end{scope}
\draw (68.000000,30.000000) -- (68.000000,0.000000);
\begin{scope}
\draw[fill=lavenderblue] (68.000000, 15.000000) +(-45.000000:8.485281pt and 29.698485pt) -- +(45.000000:8.485281pt and 29.698485pt) -- +(135.000000:8.485281pt and 29.698485pt) -- +(225.000000:8.485281pt and 29.698485pt) -- cycle;
\clip (68.000000, 15.000000) +(-45.000000:8.485281pt and 29.698485pt) -- +(45.000000:8.485281pt and 29.698485pt) -- +(135.000000:8.485281pt and 29.698485pt) -- +(225.000000:8.485281pt and 29.698485pt) -- cycle;
\draw (68.000000, 15.000000) node {\rotatebox{90}{$U\in\mathcal{U}$}};
\end{scope}
\end{tikzpicture}
\subcaption{A process tomography scheme without ancilla qubits. Random rotations are applied in the beginning and end of the circuit evaluation.}
\end{subfigure}
\hfill
\begin{subfigure}{0.4\linewidth}
\begin{tikzpicture}[scale=1.300000,x=1pt,y=1pt]
\filldraw[color=white] (0.000000, -7.500000) rectangle (104.000000, 67.500000);
\draw[color=black] (0.000000,60.000000) -- (104.000000,60.000000);
\draw[color=black] (0.000000,45.000000) -- (104.000000,45.000000);
\filldraw[color=white,fill=white] (0.000000,41.250000) rectangle (-4.000000,63.750000);
\draw[decorate,decoration={brace,amplitude = 4.000000pt},very thick] (0.000000,41.250000) -- (0.000000,63.750000);
\draw[color=black] (-4.000000,52.500000) node[left] {$Q_i$};
\draw[color=black] (0.000000,30.000000) -- (104.000000,30.000000);
\filldraw[color=white,fill=white] (0.000000,26.250000) rectangle (-4.000000,33.750000);
\draw[decorate,decoration={brace,amplitude = 1.875000pt},very thick] (0.000000,26.250000) -- (0.000000,33.750000);
\draw[color=black] (-4.000000,30.000000) node[left] {$C_i$};
\draw[color=black] (0.000000,15.000000) -- (104.000000,15.000000);
\draw[color=black] (0.000000,15.000000) node[left] {$\ket 0$};
\draw[color=black] (0.000000,0.000000) -- (104.000000,0.000000);
\draw[color=black] (0.000000,0.000000) node[left] {$\ket 0$};
\begin{scope}
\draw[fill=lavenderblue] (12.000000, 15.000000) +(-45.000000:8.485281pt and 8.485281pt) -- +(45.000000:8.485281pt and 8.485281pt) -- +(135.000000:8.485281pt and 8.485281pt) -- +(225.000000:8.485281pt and 8.485281pt) -- cycle;
\clip (12.000000, 15.000000) +(-45.000000:8.485281pt and 8.485281pt) -- +(45.000000:8.485281pt and 8.485281pt) -- +(135.000000:8.485281pt and 8.485281pt) -- +(225.000000:8.485281pt and 8.485281pt) -- cycle;
\draw (12.000000, 15.000000) node {H};
\end{scope}
\begin{scope}
\draw[fill=lavenderblue] (12.000000, -0.000000) +(-45.000000:8.485281pt and 8.485281pt) -- +(45.000000:8.485281pt and 8.485281pt) -- +(135.000000:8.485281pt and 8.485281pt) -- +(225.000000:8.485281pt and 8.485281pt) -- cycle;
\clip (12.000000, -0.000000) +(-45.000000:8.485281pt and 8.485281pt) -- +(45.000000:8.485281pt and 8.485281pt) -- +(135.000000:8.485281pt and 8.485281pt) -- +(225.000000:8.485281pt and 8.485281pt) -- cycle;
\draw (12.000000, -0.000000) node {H};
\end{scope}
\draw (33.000000,60.000000) -- (33.000000,15.000000);
\begin{scope}
\draw[fill=lavenderblue] (33.000000, 60.000000) circle(3.000000pt);
\clip (33.000000, 60.000000) circle(3.000000pt);
\draw (30.000000, 60.000000) -- (36.000000, 60.000000);
\draw (33.000000, 57.000000) -- (33.000000, 63.000000);
\end{scope}
\filldraw (33.000000, 15.000000) circle(1.500000pt);
\draw (39.000000,45.000000) -- (39.000000,0.000000);
\begin{scope}
\draw[fill=lavenderblue] (39.000000, 45.000000) circle(3.000000pt);
\clip (39.000000, 45.000000) circle(3.000000pt);
\draw (36.000000, 45.000000) -- (42.000000, 45.000000);
\draw (39.000000, 42.000000) -- (39.000000, 48.000000);
\end{scope}
\filldraw (39.000000, 0.000000) circle(1.500000pt);
\draw (64.000000,60.000000) -- (64.000000,30.000000);
\begin{scope}
\draw[fill=lavenderblue] (64.000000, 45.000000) +(-45.000000:14.142136pt and 29.698485pt) -- +(45.000000:14.142136pt and 29.698485pt) -- +(135.000000:14.142136pt and 29.698485pt) -- +(225.000000:14.142136pt and 29.698485pt) -- cycle;
\clip (64.000000, 45.000000) +(-45.000000:14.142136pt and 29.698485pt) -- +(45.000000:14.142136pt and 29.698485pt) -- +(135.000000:14.142136pt and 29.698485pt) -- +(225.000000:14.142136pt and 29.698485pt) -- cycle;
\draw (64.000000, 45.000000) node {{V}};
\end{scope}
\draw (92.000000,60.000000) -- (92.000000,0.000000);
\begin{scope}
\draw[fill=lavenderblue] (92.000000, 30.000000) +(-45.000000:8.485281pt and 50.911688pt) -- +(45.000000:8.485281pt and 50.911688pt) -- +(135.000000:8.485281pt and 50.911688pt) -- +(225.000000:8.485281pt and 50.911688pt) -- cycle;
\clip (92.000000, 30.000000) +(-45.000000:8.485281pt and 50.911688pt) -- +(45.000000:8.485281pt and 50.911688pt) -- +(135.000000:8.485281pt and 50.911688pt) -- +(225.000000:8.485281pt and 50.911688pt) -- cycle;
\draw (92.000000, 30.000000) node {\rotatebox{90}{$U\in\mathcal{U}$}};
\end{scope}
\end{tikzpicture}
\subcaption{Process tomography scheme using ancilla qubits. The ancilla qubits are entangled with the input of the channel, hence, applying rotations on ancillas become equivalent to random initializations.}
\end{subfigure}
\hfill~
\caption{Two equivalent circuits for shadow process tomography of {quantum channels defined by a circuit fragment containing gate} $V$. {Depicted above, $\mathcal C_i$ represent the circuit inputs (which can be assumed to be $\ket 0$) and $\mathcal Q_i$ represent the quantum inputs}. Unitaries picked from a measurement primitive $\mathcal U$ can be \textbf{(left)} applied on both the quantum inputs and outputs directly, or \textbf{(right)} applied at the end of the circuit using ancillas.
}
    \label{fig:shadowprocess}
\end{figure}

Just like in shadow state tomography, we pick a measurement primitive $\mathcal U$. 
Then, there are two equivalent ways of obtaining the classical shadow, as depicted in Figure \ref{fig:shadowprocess}. 
The first method straightforwardly applies random unitaries on both the input (immediately before the channel) and output prior to measuring the outcome.
Equivalently, one can accomplish the same task by having the same number of ancilla qubits prepared as the dimension of the input, maximally entangle the ancilla and input qubits, and apply the random unitary altogether to the $n + m$ qubits (output + ancilla) at the end prior to measurement. 
Our numerical implementation uses the latter strategy, and measuring gives bitstrings $\hat b \in \{0,1\}^{n+m}$ that can be used to build the shadow of the Choi matrix
\begin{align}
    \hat \Lambda = \mathcal M\inv \left( U^\dagger \ket{\hat b} \bra{\hat b} U \right).
\end{align}
All procedures and results from state tomography are now transferable to process tomography. 
That is, to query polynomially many input-output relationships, the sample complexity scales only logarithmically with the number of these relationships. 
As each fragment can be viewed as a quantum channel from its quantum inputs to all of its outputs, the capability of building classical shadows of quantum channels becomes crucial for obtaining classical shadows of fragmented circuits.

\section{Classical shadows of fragmented circuits} \label{sec:divide-and-conquer}

A divide and conquer algorithm for classical shadow tomography follows directly from combining the results of the two previous sections. 
Eq.~\eqref{eq:gencutchoi} writes the expectation of an observable $O$ with respect to a circuit-induced state $\rho$ as sums and products of expectations with respect to Choi states of necessary fragments.
Thus, to obtain classical shadows of a fragmented circuit, one simply samples shadows from the Choi state via the channel-state duality and shadow process tomography. 
Like any circuit-cutting procedure, shadow tomography can be performed independently on each fragment in a distributed manner.
Moreover, this method does not assume the observable of interest, $O$, a priori. 
Having classical shadows of all fragments is equivalent to having classical shadows of the uncut state prepared by a circuit.
Subsequent circuit reductions, as instructed in Eq.~\eqref{eq:reduction}, can be performed once an observable $O$ is chosen.
Lastly, classical shadows of fragments can be reused when multiple observables, $O_1, O_2, \dots, O_M$, are of interest. 
As in the case of ordinary shadow tomography, the number of measurements needed grows logarithmically with $M$.

\subsection{Sample complexity of divide and conquer}\label{sec:sft-math}
Once the shadows of fragments are obtained, it is preferable to have theoretical guarantees that once operations in Eq.~\eqref{eq:gencutchoi} are performed, the final estimate of $\tr(O\rho)$ is close to the true value. 
Below, we present the sample complexity of each fragment for estimating the expectation of a single observable $O$. 
For each fragment, the same ensemble of classical shadows will be used to obtain information about the quantum input and output in addition to the circuit output if present.
As a result, the sample complexity of each fragment is highly dependent on the degree of the fragment on the corresponding multigraph. 
{We've compiled a list of notation introduced in prior section in Table \ref{tab:notation}, and } the formal statement is provided below.

\begin{table}[]
    \centering
    {\begin{tabular}{|c|c|}
        \hline
        \textit{Symbol} & \textit{Definition} \\
        \hline
        $N_q$ & the number of qubits in the quantum circuit \\
        \hline
        $O$ & the observable we hope to estimate the expectation of \\
        \hline
        $\mathcal K$ & the set of qubits that $O$ acts non-trivially on \\
        \hline 
        $G = (F,E)$ & a directed graph denoting how circuit fragments relate to each other; \\ & $F$ is the set of fragments and $E$ is the set of edges each corresponding to a cut \\
        \hline
        $f \in F$ & a fragment of the cut circuit \\
        \hline
        $\mathcal Q_i(f)$ & the ``quantum inputs'' of fragment $f$ (the in-degrees of the fragment graph) \\
        \hline 
        $\mathcal Q_o(f)$ & the ``quantum outputs'' of fragment $f$ (the out-degrees of the fragment graph) \\
        \hline 
        $\mathcal C_o(f)$ & the ``circuit output'' of the fragment (the qubits of $f$ being acted on by $O$) \\
        \hline
    \end{tabular}}
    \caption{A table of used notation in Theorem \ref{thm:mainthm}.}
    \label{tab:notation}
\end{table}

\begin{theorem}\label{thm:mainthm}
Consider a quantum state $\rho$ from a circuit that is cut into fragments with the corresponding reduced fragment graph $G = (F, E)$ (see Section \ref{sec:gencirccut}) and a $|\mathcal K|$-local observable $O$ {that factorizes across the given fragmentation scheme}. \red{For any fragment $f \in F$}, define the \emph{degree} of a fragment to be the sum of its number of quantum inputs, quantum outputs, and circuit outputs 
\begin{align}
    \red{\deg(f) = Q_i(f) + Q_o(f) + C_o(f),}
    &&
    \red{\qdeg(f) = Q_i(f) + Q_o(f)}.
\end{align}
Then, for any $\epsilon, \delta \in (0,1)$ and a fixed fragment \red{$f$}, let 
\begin{align}
    \red{K_{f} = 2 \log \frac{ 2 |F| \cdot 4^{\qdeg(f)} }{\delta},}
    &&
    \red{N_{f} = \frac{34 |F|}{\epsilon^2} 4^{\deg(f)+2|E|} \|O\|_{\text{S}}^4,}
    \label{eq:mainthmeq}
\end{align}
where $\|\cdot \|_S$ is the Pauli-string norm, \textit{i.e.,} $\|O\|_{\text{S}} = \sqrt{\tr(O^\dagger O)/2^{N_q}}$. Then, a collection of \red{$N_{f} \cdot K_{f}$} independent classical shadows for each fragment \red{$f \in F$}  can estimate $\tr(O\rho)$ with (expected) additive error $\epsilon$ with probability at least $1 - \delta$ (neglecting higher order terms).
\end{theorem}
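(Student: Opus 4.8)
The plan is to transport the single-shadow guarantee of Lemma~\ref{thm:shadowtomo} through the circuit-cutting identity Eq.~\eqref{eq:gencutchoi}, treating each fragment as an independent instance of shadow \emph{process} tomography and then controlling how the per-fragment errors propagate through the sums and products. First I would rewrite Eq.~\eqref{eq:gencutchoi} as $\tr(O\rho)=\sum_P\alpha_P\sum_{\bm M}\prod_{f_k\in\kappa} t_{f_k}(\bm M,P)\cdot\prod_{f_i\in\Gamma} t_{f_i}(\bm M)$, where each factor $t_f$ is the expectation of a Pauli operator supported on at most $\deg(f_k)=Q_i(f_k)+Q_o(f_k)+C_o(f_k)$ qubits against the Choi state $\Lambda_f$. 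On each fragment I build classical shadows of $\Lambda_f$ using the ancilla scheme of Figure~\ref{fig:shadowprocess} and Section~\ref{sec:st-code}; since a fragment's shadows are generated independently of every other fragment's, the single-shot estimators $\hat t_f(A)$ of all relevant Paulis $A$ are unbiased, $\E\,\hat t_f(A)=t_f(A)$, and mutually independent across fragments.

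Next I would fix the estimator and check unbiasedness. Partition each fragment's $N_{f_k}K_{f_k}$ shadows into $K_{f_k}$ batches of size $N_{f_k}$, form a batch-mean $\bar t_f^{(g)}(A)$ for every Pauli $A$ on $f$, combine matched batches across fragments into $\hat o^{(g)}=\sum_P\alpha_P\sum_{\bm M}\prod_f \bar t_f^{(g)}$, and return $\hat o=\mathrm{median}_g\,\hat o^{(g)}$. Expanding each product and using independence across fragments shows $\E\,\hat o^{(g)}=\tr(O\rho)$ up to terms that are quadratic (and higher) in the per-fragment fluctuations; those are exactly the ``higher order terms'' the statement allows me to neglect.

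Then comes the quantitative core: bounding $\mathrm{Var}[\hat o^{(g)}]$ and invoking median-of-means. Independence across fragments lets the variance of each product $\prod_f\bar t_f^{(g)}$ factor into a sum over ``which fragment fluctuates'' of per-fragment variances weighted by $\prod_{f'}|t_{f'}|^2\le 1$; the single-shot shadow variance of a weight-$w$ Pauli is at most $4^w\|O\|_\infty^2$ (the shadow-norm bound underlying Lemma~\ref{thm:shadowtomo}), hence at most $4^{\deg(f_k)}\|O\|_\infty^2/N_{f_k}$ after batching. Summing over the $\le|\kappa|+|\Gamma|$ fragments in each product and accounting for the combinatorial prefactor of the outer $\sum_P\sum_{\bm M}$ — which I expect to collapse to $\mathcal O(|E|+|\mathcal K|)$ rather than an exponential, by exploiting the same trace constraints and cancellations that produced Eq.~\eqref{eq:reduction} — gives $\mathrm{Var}[\hat o^{(g)}]\lesssim(|E|+|\mathcal K|)(|\kappa|+|\Gamma|)\,4^{\deg(f_k)}\|O\|_\infty^2/N_{f_k}$, so the choice of $N_{f_k}$ in Eq.~\eqref{eq:mainthmeq} pushes this below $\epsilon^2/34$. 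A Chebyshev bound per batch followed by a Chernoff bound on how many batches land within $\epsilon$ then shows $|\hat o-\tr(O\rho)|\ge\epsilon$ with probability $\exp(-\Omega(K_{f_k}))$ for the worst fragment; taking $K_{f_k}=2^{\qdeg(f_k)+1}\log(2|F|/\delta)$ is what I expect to be forced so that the union bound over all $|F|$ fragments — and over the $4^{\qdeg(f_k)}$ quantum-wire branch Paulis each fragment's shadow must resolve at once — closes at total failure probability $\delta$.

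The main obstacle is this middle step: controlling how errors compound \emph{multiplicatively} through the product over fragments while keeping the prefactors polynomial, not exponential, in the number of cuts. The leverage is that independence makes variances add rather than deterministic errors multiply term-by-term, and that the $\bm M$-sum inherits sparsity/boundedness from the physical (trace-preserving) structure of the fragment channels; getting the median-of-means bookkeeping correct when the ``samples'' are themselves products of independent sub-estimators, and pinning down precisely where the $2^{\qdeg(f_k)}$ enhancement of the batch count is genuinely needed versus merely sufficient, is the delicate part of the argument.
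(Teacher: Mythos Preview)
Your plan is essentially the paper's: estimate each trace factor in Eq.~\eqref{eq:gencutchoi} via per-fragment shadow (process) tomography, propagate the per-fragment errors through the product over $\kappa\cup\Gamma$ and the outer sum, and close with a union bound over fragments. The paper, however, does this much more coarsely than you propose. It does \emph{not} build a fresh Chebyshev/Chernoff median-of-means argument; instead it invokes Lemma~\ref{thm:shadowtomo} as a black box on each fragment (with $M=4^{\qdeg(f_k)}$ observables of size $\le\deg(f_k)$, target error $\eta$, failure probability $\gamma$), then feeds those $\eta$-accurate estimates into Lemma~\ref{thm:errorlemma} to obtain $\std(\hat o-\tr(O\rho))\le\|O\|_\infty\sqrt{(|E|+|\mathcal K|)(|\kappa|+|\Gamma|)}\,\eta$, and finally sets $\gamma=\delta/|F|$ by a first-order union bound. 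The paper is explicit that this is heuristic: it treats the high-probability additive error from Lemma~\ref{thm:shadowtomo} as a standard deviation, and it \emph{assumes} correlations among different $\bm M$-terms (which share the same shadow ensemble) are negligible.

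Your flagged obstacle---that the $\sum_{\bm M}$ sum has $4^{|E_{\not\Delta}|}$ terms and could in principle contribute an exponential prefactor---is exactly the soft spot in the paper's argument too. The paper does not collapse this sum via ``trace constraints and cancellations'' as you hope; it simply applies Lemma~\ref{thm:errorlemma} with the summand count taken to be the polynomial $(|E|+|\mathcal K|)(|\kappa|+|\Gamma|)$ and moves on. So your more careful variance bookkeeping is a genuine refinement over what the paper actually proves, but you should not expect the ``cancellations that produced Eq.~\eqref{eq:reduction}'' to rescue you here: those cancellations eliminated fragments in $\Delta$ only, and for $\bm M$ restricted to $E_{\not\Delta}$ no analogous trace-preservation argument is available. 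If you want to match the paper, drop the from-scratch median-of-means and just cite Lemma~\ref{thm:shadowtomo} per fragment, then Lemma~\ref{thm:errorlemma}, and accept the same informal independence caveat the paper states after its proof.
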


The proof is deferred to Appendix \ref{sec:main-thm-pf}.
\red{Theorem \ref{thm:mainthm}} gives the ``per-fragment'' sample complexity for estimating the expectation of an observable with respect to an uncut circuit.
\red{
We keep the factors of $K_f$ and $N_f$ distinct for consistency and ease of cross-referencing with Lemma \ref{thm:shadowtomo} (Theorem S1 of Ref.~\cite{huang2020predicting}).
These factors can be traced to a median-of-means estimation procedure, where by an estimator is obtained by taking the median of $K$ independent sample means of size $N$.
}
Similar to the existing analysis of circuit cutting in the literature, the cost of estimating expectations using fragmentation scales exponentially with the number of cuts. 

{
\begin{remark}
For notational cleanliness, Theorem \ref{thm:mainthm} states the sample complexity for estimating the expectation of a single observable.
However, the benefits of performing shadow tomography, namely, the ability to estimate multiple observables using the same ensemble of classical shadows, still remains. In the case of estimating a single observable, we require no a priori knowledge of the observable aside from locality on each fragment. On the other hand, if we want to estimate the expectation of $M$ observables that factorize across the same circuit cutting scheme, the per-fragment sample complexity becomes \red{$K_f \cdot N_f$, where}
\begin{align}
    \red{K_{f} = 2 \log \frac{ 2 |F| \cdot 4^{\qdeg(f)} M }{\delta},}
    &&
    \red{N_{f} = \frac{34 |F|}{\epsilon^2} 4^{\max_{i \in [M]} \deg(f; O_i)+2|E|}  \max_{i \in [M]} \|O_i\|_{\text{S}}^4,}
\end{align}
\red{and $\deg(f; O_i)$} refers to the degree of the fragment when the desired observable is $O_i$. As one would expect, the sample complexity grows logarithmically with respect to $M$ barring other parameters of the model.
\end{remark}
}

{
\begin{remark}
The assumption that the observable factorizes across the fragments in Theorem~\ref{thm:mainthm} can also be relaxed. As analyzed in Section \ref{sec:gencirccut}, one can always decompose an observable into a linear combination of {factorized observables} (e.g., Pauli strings) and repeat the divide-and-conquer procedure on each element of the decomposition. The analysis is omitted in Theorem~\ref{thm:mainthm} as decomposing an arbitrary quantum observable can introduce exponentially many terms, further convoluting the existing bound. We note that, in practice, observables often admit a suitable decomposition involving only polynomially many basis elements, in which case, the sample complexity will also scale polynomially.
\end{remark}
}

The comparison of the divide-and-conquer scheme with regular shadow tomography without cuts heuristically boils down to contrasting $\sum_{f \in F} 4^{\deg(f_k) + 2|E|}$ with $4^{\size(O)}$. For circuit cutting to be effective, we should expect the size of the observable to be sufficiently large to justify paying the overhead associated with inferring degrees of freedom near the cut without excessive fragmentation. 
\red{We carry out some further calculations in the remark below.}

\red{
\begin{remark}\label{rmk:heurstics}
Consider observable $O$ with $\|O\|_\infty = \|O\|_\text{S} \approx 1$ (e.g., because $O$ is a Pauli string). Then, circuit cutting is beneficial when
\begin{align}
    \sum_{f \in F} K_f \cdot N_f \sim \sum_{f \in F} \log\left(\frac{|F| 4^{\qdeg(f)}}{\delta}\right) \frac{|F|}{\epsilon^2} 4^{\deg(f) + 2|E|}
    \lesssim \frac{\log \delta \inv}{\epsilon^2} 4^{\size(O)},
\end{align}
where we can bound the sum over $f\in F$ by $|F|$ times the maximal summand, arriving at the inequality
\begin{align}
  C_o^{\mathrm{max}}(O) + \textnormal{max-qdeg}(F) + 2|E| + 2\log|F| + \log\left(1 + \frac{\log|F| + \textnormal{max-qdeg}(F)}{\log\delta^{-1}}\right) \lesssim \size(O).
  \label{eq:advantage_full}
\end{align}
Here $C_o^{\mathrm{max}}(O)=\max_f C_o(f)$ is the maximum number of qubits addressed by $O$ in any fragment, and $\textnormal{max-qdeg}(F) = \max_{f \in F} \qdeg(f)$ is the maximum number of cuts incident to a fragment.
It is generally the case that $\log|F|$ is negligible compared to $|E|$, and as long as $\log|F| + \textnormal{max-qdeg}(F)$ is not exponentially large in some relevant parameter---in which case Eq.~\eqref{eq:advantage_full} cannot be satisfied anyways---the final term on the left-hand side of Eq.~\eqref{eq:advantage_full} will be small.
Altogether, the (approximate) condition for an asymptotic advantage from circuit cutting becomes
\begin{align}
  C_o^{\mathrm{max}}(O) + \textnormal{max-qdeg}(F) + 2|E| \lesssim \size(O).
  \label{eq:advantage_full_simplified}
\end{align}
In the extreme case when $\size(O) = 1$, this inequality is never achieved because circuit cutting adds unfavorable degrees of freedom to the fragments.
As $\size(O)$ increases, however, one can eventually expect an advantage from circuit cutting.
Note that only the subgraph in the past light cone of $O$ needs to be considered in Eq.~\eqref{eq:advantage_full_simplified}.
\end{remark}
}

\red{From the remark above, we see that Theorem~\ref{thm:mainthm} is difficult to interpret as-is due to its generality.} For an $N_q$-qubit circuit, we therefore introduce the cascade and cluster circuit ansatz, both illustrated in Figure~\ref{fig:clustered-ansatz}. By making concrete assumptions on circuit structure, we present variants of Theorem~\ref{thm:mainthm} tailored to these two circuit ansatzes. Via these two circuit structures, we will \red{analytically and} numerically study the trade-off between the size of the observable and the level of fragmentation in the next subsection (see Section~\ref{sec:numerics}).

\begin{figure}[t]
    ~\hfill
    \begin{subfigure}{.45\linewidth}
        \includegraphics[width=\linewidth]{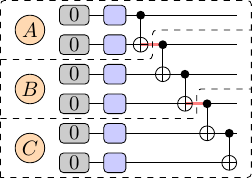}
        \caption{Cascade circuit ansatz with three fragments.}
        \label{fig:cascade}
    \end{subfigure}
    \hfill
    \begin{subfigure}{.45\linewidth}
        \includegraphics[width=\linewidth]{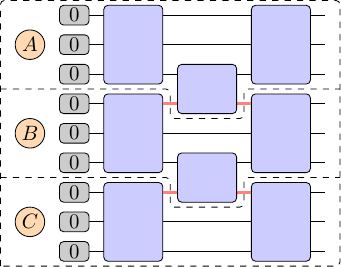}
        \caption{Clustered circuit ansatz with three fragments, reproduced from Ref.~\cite{perlin2021quantum} with permission from the authors.}
        \label{fig:clustered}
    \end{subfigure}
    \hfill~
    \caption{Circuit ansatze used in experiments. Each purple block represents a Haar-random gate on the corresponding qubits. Red lines indicate the location of cuts, and individual fragments are labeled by letters and separated by dashed lines.}
    \label{fig:clustered-ansatz}
\end{figure}

{
\paragraph{Cascade Circuit Ansatz} The cascade circuit ansatz (Figure~\ref{fig:cascade}) features a GHZ-like circuit structure. Performing $K$ cuts (with $K < N_q$) produces $K+1$ fragments, and we will label each fragment in the order with the qubit index. Each fragment has a single quantum input (except for the first fragment) and a single quantum output (except for the last fragment). Notably, if the observable is local to a subset of qubits $\mathcal K$, all fragments after qubit $\max \mathcal K$ can be ignored. Combining this information, we present the Theorem \ref{thm:mainthm} assuming the cascade circuit ansatz.
}

{
\begin{corollary}\label{thm:cascade-corollary}
Consider an $N_q$-qubit quantum state $\rho$ from a cascade circuit (see Section \ref{sec:gencirccut}) amendable to $K$ cuts, and an observable $O$ local to qubits $\mathcal K \subset [N_q]$. Then, for any $\epsilon, \delta \in (0,1)$ and a fixed fragment \red{$f \in F$}, let 
\begin{align}
    \red{K_{f} = 2 \log \frac{32|F|}{\delta},}
    &&
    \red{N_{f} = \frac{34 |F|}{\epsilon^2} 4^{\mathcal C_o(f) + 2|F|} \|O\|_{\text{S}}^4,}
    \label{eq:cascade-corollary}
\end{align}
where $|F| = K+1$.
Then, a collection of \red{$N_{f} \cdot K_{f}$} independent classical shadows for each fragment \red{$f \in F$}  can estimate $\tr(O\rho)$ with (expected) additive error $\epsilon$ with probability at least $1 - \delta$ (neglecting higher order terms).
\end{corollary}
}
\red{Carrying out calculations similar to that of Remark \ref{rmk:heurstics} gives that divide-and-conquer reduces sample complexity when
\begin{align}
  C_o^{\mathrm{max}}(O) + 2|F| \lesssim \size(O).
\end{align}
For example, when the size of the observable matches the size of the circuit, i.e., $\size(O) = N_q$, then $C_o^{\mathrm{max}}(O) = N_q/|F|$ and circuit cutting is advantageous when there are at least $N_q/|F|\gtrsim2$ qubits per fragment.
}

{
\paragraph{Clustered Cascade Circuit Ansatz} The cascade circuit ansatz (Figure~\ref{fig:clustered}) features ``clusters'' of quantum gates that are staggered next to each other. There are clusters of two sizes, call them $c_1$ and $c_2$ with $c_1 > c_2$. Viewed horizontally, sizes of clusters alternate starting from a layer of $c_1$-qubit clusters followed by clusters of size $c_2$ placed between adjacent clusters of size $c_1$. Let $d$ be the depth of the circuit, and $\ell = N_q / c_1$ be the number of ``layers'' of clusters laid vertically. For example, in Figure~\ref{fig:clustered}, $c_1 = 3$, $c_2 = 2$, $d = 3$, and $\ell = 3$.
}

{The cuts will be performed right before and after the clusters of size $c_2$, splitting the circuit into $\ell$ fragments. Each fragment then has $(d-1)/2$ quantum inputs and $(d-1)/2$ quantum outputs. Moreover, unlike the cascade circuit, the resulting fragment graph is cyclic and no circuit reduction can be performed. With this in mind, we can calculate the sample complexity of our divide-and-conquer scheme assuming the clustered circuit ansatz.}

{
\begin{corollary}\label{thm:clustered-corollary}
Consider an $N_q$-qubit quantum state $\rho$ from a clustered circuit with parameters $(c_1, c_2, d, \ell)$, and an observable $O$ local to qubits $\mathcal K \subset [N_q]$. Then, for any $\epsilon, \delta \in (0,1)$ and a fixed fragment \red{$f \in F$}, let 
\begin{align}
    \red{K_{f} = 2 \log \frac{ 2 \ell \cdot 4^{\qdeg(f)} }{\delta},}
    &&
    \red{N_{f} = \frac{34 \ell}{\epsilon^2} 4^{C_o(f)+2(\ell-1)(d-1)} \|O\|_{\text{S}}^4.}
    \label{eq:clustered-corollary}
\end{align}
Then, a collection of \red{$N_{f} \cdot K_{f}$} independent classical shadows for each fragment \red{$f \in F$}  can estimate $\tr(O\rho)$ with (expected) additive error $\epsilon$ with probability at least $1 - \delta$ (neglecting higher order terms).
\red{Here $\qdeg(f) \le 2(d-1).$}
\end{corollary}
}
\red{Again using Remark \ref{rmk:heurstics} we find that circuit cutting for clustered circuits with the structure in Fig.~\ref{fig:clustered} is asymptotically advantageous when
\begin{align}
    C_o^{\mathrm{max}}(O) + 2\ell(d-1) \lesssim \size(O),
\end{align}
where we note that here $\textnormal{max-qdeg}(F)=2(d-1)$, $2|E|=2(\ell-1)(d-1)$, and $\ell=|F|$.
}

\subsection{Numerical Results}
\label{sec:numerics}

Many components involved in Theorem \ref{thm:mainthm} are dependent on the particular circuit that one chooses to run. Hence, we would like to establish some preliminary numerical results that depict the differences between shadow tomography on unfragmented circuits and our proposed divide and conquer method. We implement the algorithm by adapting Pennylane's circuit cutting and shadow tomography functions \cite{bergholm2018pennylane}. We run our experiments either using a clustered-circuit ansatz or a cascade-circuit ansatz (see Figure \ref{fig:clustered-ansatz}) comprised of Haar-random gates. Each circuit was paired with a randomly generated Pauli observable of a chosen size. We examined the scaling behavior of the two methods with respect to the number of samples (denoted by shots), the size of the observable (denoted by $\size(O)$), and the number of fragments (denoted by $|F|$). We repeat each experimental setting 250 times with different randomly-generated circuit-observable pairs. The results are displayed in Figure \ref{fig:main-num}.

\begin{figure}
    \centering
    \includegraphics[width=.6\linewidth]{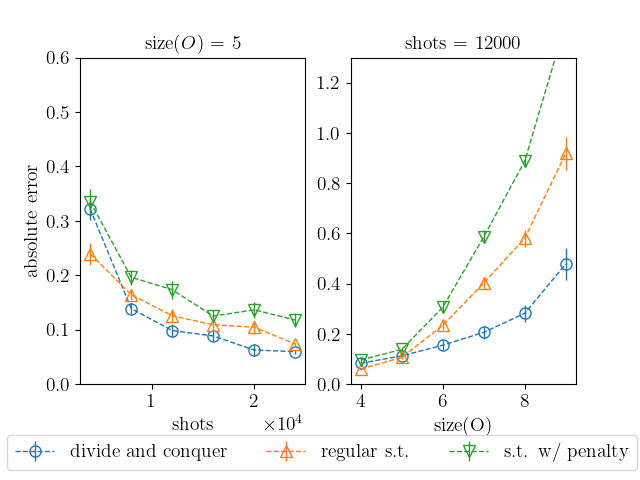}
    \caption{Numerical results comparing estimates obtained from shadow tomography with (blue) and without (orange) fragmentation using the clustered-circuit ansatz (see Figure \ref{fig:clustered}) {with $12$ qubits and $|F| = 3$ fragments}. The estimate can be penalized (green) if it was an uninformed guess (clarified in main text). The lower the absolute error, the closer the estimated expectation is to the true expected value. \textbf{(left)} The absolute error with respect to the number of shots (classical shadows) across different numbers of fragments. \textbf{(right)} The absolute error with respect to the size of the randomly generated observable across different numbers of fragments.}
    \label{fig:main-num}
\end{figure}

\paragraph{Advantageous regimes} 
For a fixed circuit, fragmentation significantly improves the estimated expectation when large observables are being estimated{, which is} evident in {the right panel of Figure \ref{fig:main-num}}. As the weight of an observable $O$ grows, the difference in accuracy between regular shadow tomography ($|F|=1$) and fragmented shadow tomography increases. Since each fragment is only ``responsible'' for the part of $O$ that directly affects the qubits in the fragment, circuit cutting divides the observable into smaller parts. Recall that the sample complexity of shadow tomography scales exponentially with respect to the weight of the observable. As the number of fragments increases, the average per-fragment weight of each observable decreases proportionally, which then lowers the sample complexity by an exponential factor. This shrinkage in the size of $O$ on a fragment results in an increasing separation between fragmented and unfragmented shadow tomography. 

\begin{figure}
    \centering
    \includegraphics[width=.6\linewidth]{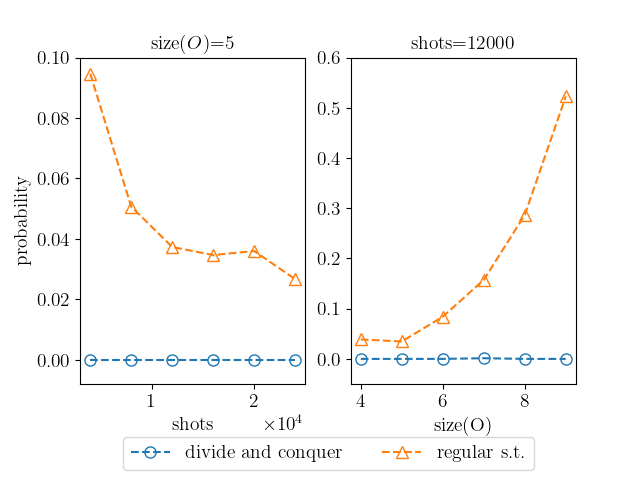}
    \caption{The probability of not observing the Pauli string of interest using (blue) fragmented and (orange) ordinary shadow tomography in Figure \ref{fig:main-num}, shown with respect to \textbf{(left)} the number of shots and \textbf{(right)} the size of the Pauli string.}
    \label{fig:unobs-seq}
\end{figure}

However, the justification above significantly understates the benefit of fragmentation when estimating expectations of large observables. Recall Section \ref{sec:st-code}---if the sequence of the observable being estimated is not present in the ensemble of classical shadows, we return a default value of zero. Meanwhile, for any Pauli observable $P$ and state $\rho$ induced by the circuit ansatz in Figure \ref{fig:clustered-ansatz} comprised of only Haar-random unitaries, one should expect that $\E_{(\rho,P) \sim \text{Ansatz}} \tr(P\rho) 
= 0$, where $\E_{(\rho,P) \sim \text{Ansatz}}$ denotes an average over choices of Haar-random gates defining $\rho$ and fixed-size observables $P$.
Moreover, we can also expect a shrinking variance with $(\rho,P) \sim \text{Ansatz}$ in the limit of large circuit size, which means that the individual terms in this average approach zero as well.
Although under regular circumstances, outputting the uniform prior (zero) as an a priori estimate is the least-informed guess that one can make, in our numerical experiment, zero is actually a high quality guess!

We can study the empirical probability of such phenomenon occurring, see Figure \ref{fig:unobs-seq}. It turns out that for large observables, the desired Pauli string is only observed roughly half of the times when performing shadow tomography with the clustered-circuit ansatz and 12000 shots. The number of shots is an influencing factor, too. However, it has much less of an effect compared to the observable weight. Nonetheless, both trends can be explained through the same rationale: the number of shots dedicated to each degree of freedom is too low. Meanwhile, fragmentation negates this problem completely, as seen by the blue line in Figure \ref{fig:unobs-seq}. To further emphasize this point, we provide a (moderate but arbitrary) penalty of 1 for making an uninformed guess for the expectation value of an operator, shown in the green lines in Figure \ref{fig:main-num}. The penalty widens the already existing gap between regular and fragmented shadow tomography, especially when the number of shots is low and the weight of the observable is high. In conclusion, we expect fragmentation to be a beneficial procedure when the observable is large.

\begin{figure}
    \centering
    \includegraphics[width=.6\linewidth]{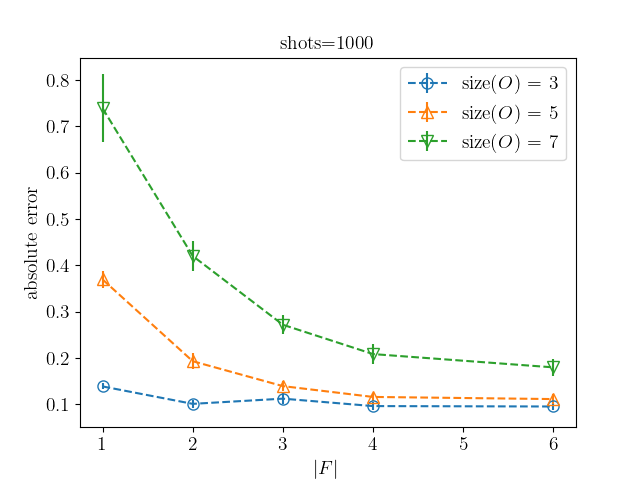}
    \caption{Numerical results showing the error scaling with respect to the number of fragments using the cascade circuit ansatz (see Figure \ref{fig:cascade}). Empirical results show that error decreases as the number of fragment increases, more dramatically so for large observables.}
    \label{fig:fragment-scaling}
\end{figure}

Moreover, absolute error also decreases with the number of fragments given a fixed total shot count. Figure \ref{fig:fragment-scaling} shows numerical scaling of error with the number of fragments in the cascade circuit ansatz. We see that, even under relatively low shot counts, absolute error empirically vanishes with the fragment size. This goes to show that, given sufficient number of shots, the overhead incurred by characterizing quantum degrees of freedom would not significantly interfere with the quality of estimation using classical shadows.

\paragraph{Limits of fragmentation}
Though there are circumstances where fragmentation is beneficial, is it necessarily the case that more fragments the merrier? The answer is no! First, it is evident from the top row of Figure \ref{fig:main-num} that fragmentation is not advantageous when low number of shots are available. Recall that there is an extra degree of freedom to infer per incident of cut on a fragment. Consequently, the number of samples needed to properly account for the additional uncertainty grows exponentially with the number of incident cuts on a fragment \red{(see also Remark \ref{rmk:heurstics})}. Thus, when quantum resources are scarce, the drawback of not having enough shots to characterize each fragment well often outweighs the benefit of avoiding having unobserved Pauli strings as described before, resulting in poor estimates.

Beside the point above, the limitation is not well understood. Theorem \ref{thm:mainthm} predicts an eventual blow-up of error as the number of fragments increases while fixing the total shot count. Though some qualitative behavior from theory can be replicated empirically by dramatically lowering the total number of shots, theoretical predictions are extremely pessimistic on the propagation of error during tensor contraction, and deviates far from empirical numerical results. We present preliminary evidence for this in Appendix \ref{sec:opt-fragment}, and we defer the development of a tighter, more quantitatively informative bound to future work.

\section{Conclusions} \label{sec:conclusion}

In this paper, we introduced fragmented shadow tomography, a novel divide-and-conquer method for building classical shadows while incorporating circuit cutting procedures. This method combines the benefits of both of its predecessors---efficient computation and storage of quantum states through classical shadows and the ability to be parallelized in independent quantum simulators/devices. We derived a general formula to combine shadow tomography estimates of each fragment. Moreover, we provide the respective sample complexity for each fragment such that the total additive error is bounded to desired precision. Lastly, we numerically demonstrated its advantages over traditional, unfragmented shadow tomography, exceedingly so when large observables are of interest. 

We focused on building classical shadows including all components---quantum input, quantum output, and circuit output---of a fragment. This need not be the case. One can also consider partially derandomizing fragmented shadow tomography by building classical shadows only on circuit outputs. That is, for all quantum inputs and outputs, \textit{i.e.,}~$\bm M \in \mathcal B^{|E_{\not \Delta}|}$, we can replace randomized Pauli measurements with deterministic initialization/measurement schemes, keeping the randomized procedure only for circuit outputs. Each classical shadow will be a function of $M_{f_k}$ and the respective Choi matrices can be inferred via maximum likelihood procedures (see \cite{perlin2021quantum}). The two methods are equivalent in the large sample limit. However, we expect derandomization will bring more statistical stability under finite computational resources.
Derandomization may also reduce errors, as in Ref.~\cite{huang2020predicting}.

\red{Both analytically and in numerical experiments}, we see a trade-off between the number of fragments and the size of the observable. For small observables, excessive fragmentation leads to accumulation of error and decrease in number of shots attributed to each fragment. On the other hand, dividing circuits becomes beneficial for large observables that would require excessively many shots to estimate. Thus, we suspect an existence of an optimal number of cuts given a quantum circuit and an observable (or a priori knowledge on the size of the observable). See Appendix \ref{sec:opt-fragment} for preliminary evidence. As a consequence of this conjecture, predicting the optimal number of fragments subsequently becomes an important open question to further leverage quantum divide-and-conquer algorithms.

\paragraph{Acknowledgement}
The authors thank Nathan Killoran and Thomas Bromley for helpful discussions.
M.A.P. thanks Pranav Gokhale for enabling an environment in which this work was possible.
Z.H.S is supported by the
U.S. Department of Energy Office of Science National
Quantum Information Science Research Center, QNEXT. D.C. was funded by the SULI fellowship at Argonne National Laboratory. 
 
The submitted manuscript has been created by UChicago Argonne, LLC, Operator of Argonne National Laboratory (``Argonne”). Argonne, a U.S. Department of Energy Office of Science laboratory, is operated under Contract No. DE-AC02-06CH11357. The U.S. Government retains for itself, and others acting on its behalf, a paid-up nonexclusive, irrevocable worldwide license in said article to reproduce, prepare derivative works, distribute copies to the public, and perform publicly and display publicly, by or on behalf of the Government. The Department of Energy will provide public access to these results of federally sponsored research in accordance with the DOE Public Access Plan (\url{http://energy.gov/downloads/doe-public-access-plan}).

\paragraph{Conflict of Interest} The authors declare that they have no conflict of interest.

\printbibliography

\appendix
\section{Proof of Theorem \ref{thm:mainthm}}\label{sec:main-thm-pf}

The error analysis for recombining estimators from fragments requires simple facts regarding propagation of error when taking sums and products of perturbed quantities. 
These tools are proved as lemmas below.

\begin{lemma}\label{thm:errorlemma}
Let $(\hat a_i)_{i=1}^n$ be a collection of independent random variables satisfying $\E (\hat a_i - a_i) = 0$, $|a_i| \leq 1$, and {$\E (\hat a_i  - a_i)^2 \leq \epsilon^2$ for some $\epsilon > 0$}. Then, 
{
\begin{align}
    \std \left( \prod_i \hat a_i\right) \leq \sqrt{n} \epsilon + r(n,\epsilon)
\end{align}
where $r(n,\epsilon) = \left( \sum_{k=2}^n {n\choose k} \epsilon^{2k} \right)^{1/2}$}.
\end{lemma}

\begin{proof}
{
Using the independence of $\hat a_i$ and that they are less than 1 in magnitude, we get
\begin{align}
    \E \left(\prod_i \hat a_i - \prod_j a_j\right)^2
    = \prod_i (\epsilon^2 + a_i^2) - \prod_j a_j^2 
    = \epsilon^2 \sum_i \prod_{j\ne i} a_j^2 + \sum_{k=2}^n \sum_{S \subseteq [n]} \epsilon^{2k} \prod_{i \in S} a_i^2 \leq n \epsilon^2 + r(n,\epsilon).
\end{align}}
\end{proof}

{
\begin{remark}
Note that $r(n,\epsilon)$ is exponential with respect to $n$. However, when applying it later in the proof, we will consider a variance of $\epsilon/\sqrt{n}$ rather than $\epsilon$. In that case, the remainder term is well-controlled by $\epsilon^2$:
\begin{align}
    r\left(n, \frac{\epsilon}{\sqrt{n}} \right)^2 = \sum_{k=2}^n {n \choose k} \left( \frac{\epsilon^2}{n} \right)^k \leq \sum_{k=2}^n \frac{\epsilon^{2k}}{k!} \leq \epsilon^{4}.
\end{align}
\end{remark}
}

{In addition, we review the \textit{Schwarz Inequality} \cite{ccinlar2011probability} in the context of probability theory.
\begin{lemma}\label{thm:cauchy-schwartz}
    Let $\hat a$ and $\hat b$ be two random variables with means $a$ and $b$ respectively. Then the following inequality holds: 
    \begin{align}
        \left \vert \Cov(\hat a, \hat b) \right \vert \leq \std(\hat a) \std(\hat b)
    \end{align}
    where $\Cov(\hat a, \hat b) = \E(\hat a - a)(\hat b - b)$. 
\end{lemma}
}

\begin{proof}[Proof of Theorem \ref{thm:mainthm}]
Without loss of generality, assume that all fragments are in the past light cone of the operator $O$ (see Section \ref{sec:gencirccut}). Recall the circuit cutting formula presented in Eq.~\eqref{eq:gencutchoi}.  Using procedures from shadow tomography (see Lemma \ref{thm:shadowtomo}), we seek to estimate each trace-term in the product to $\eta$ additive error with probability $1 - \gamma$. Our goal is to find the corresponding $\eta$ and $\gamma$ such that the total error, upon adding and multiplying, gives an error of $\epsilon$ with probability $1 - \delta$. 

Let $\mu = \tr(O\rho)$ and $\hat \mu$ denote the estimator derived from the divide-and-conquer shadow tomography. Moreover, let $\tau(\bm M, P)$ be a shorthand for the product of trace terms in Eq.~\eqref{eq:gencutchoi} and $\hat \tau(\bm M, P)$ be the corresponding estimator, \textit{i.e.,}
\begin{align}
    \hat \mu = \sum_{P \in \{I,X,Y,Z\}^{\tensor |\mathcal K|}} \alpha_P \sum_{\bm M \in \mathcal B^{|E_{\not \Delta}|}} \hat \tau(\bm M, P)
\end{align}
{where 
\begin{align}
    \hat \tau(\bm M, P) = \prod_{f_k \in \kappa} \tr \left(\left( {M_{f_k}^{Q_i}}^\intercal \tensor M_{f_k}^{Q_o} \tensor P_{\mathcal C_o(f_k)} \right) \Lambda_{f_k} \right) \cdot \prod_{f_i \in \Gamma} \tr \left( \left( {M_{f_i}^{Q_i}}^\intercal \tensor M_{f_i}^{Q_o} \tensor I_{\mathcal C_o(f_i)} \right) \Lambda_{f_i} \right).
\end{align}
}
We can bound the variance of $\hat \mu$: 
\begin{align}
    \Var(\hat \mu) &= \sum_{P,P'} \vert \alpha_P \vert^2 \vert \alpha_{P'} \vert^2 \,\Cov \left( \sum_{\bm M} \hat \tau \left(\bm M, P \right), \sum_{\bm M'} \hat \tau \left(\bm M', P' \right) \right) \\
    &\leq \sum_{P,P'} \vert \alpha_P \vert^2 \vert \alpha_{P'} \vert^2 \sum_{\bm M, \bm M'} \std\left( \hat \tau\left(\bm M, P\right) \right) \std \left((\hat \tau \left(\bm M', P'\right) \right) \\
    &= \left( \sum_{P, \bm M} \vert \alpha_P \vert^2 \std \left( \hat \tau \left( \bm M, P \right) \right) \right)^2 \\
    &\leq \left( \sum_{\bm M} \left( \sum_P |\alpha_P|^2 \right) \max_{P',\bm M'} \left\{ \std \left( \hat \tau \left( \bm M', P' \right) \right) \right\} \right)^2 \\
    &= \left( 4^{|E_{\not \Delta}|} \|O\|_{\text{S}}^2 \,\max_{P,\bm M} \left\{ \std \left( \hat \tau \left( \bm M, P \right) \right) \right\} \right)^2 \label{eq:pf-1}
\end{align}
where the first inequality is by Schwarz inequality (see Lemma \ref{thm:cauchy-schwartz}), and 
\begin{align}\label{eq:pauli-string-norm}
    \|O\|_{\text{S}} = \sqrt{\tr(O^\dagger O)/2^{N_q}}
\end{align}
is the ``Pauli-string'' norm of $O$.
If $O = \sum_P \alpha_P P$, where $\alpha_P$ are scalar coefficients for the Pauli strings $P$, then $\|O\|_{\text{S}}^2 = \sum_P \|\alpha_P\|^2$.

Note that, while $\hat \tau$'s are not independent of each other, terms inside $\hat \tau$ are; {that is, we can write $\hat \tau \left( \bm M, P \right) = \prod_i \hat{a}_i\left( \bm M, P \right)$ for random variables $\hat{a}_i\left( \bm M, P \right)$ each corresponding to one of the trace terms.} Thus, by Lemma \ref{thm:errorlemma}, we can bound the standard deviation of the last term in Eq.~\eqref{eq:pf-1}:
\begin{align}
    \max_{P,\bm M} \left\{ \std \left( \hat \tau \left( \bm M, P \right) \right) \right\} \leq {\sqrt{(|\kappa| + |\Gamma|)}}\eta + \mathcal O(\eta^2).
\end{align}
So, the standard deviation of $\hat \mu$ is bounded by
\begin{align}
    \std(\hat \mu) = 4^{|E_{\not \Delta}|}  \|O\|_{\text{S}}^2  \sqrt{\left(|\kappa| + |\Gamma| \right)}\, {\left(\eta + \mathcal O(\eta^2)\right)}.
\end{align}

Note that the upper bound above holds with probability at most $(1-\gamma)^{|F|} = 1 - |F|\gamma + \mathcal O(\gamma^2) = {1-\delta}$. Thus, using Lemma \ref{thm:shadowtomo}, we get the sample complexity per fragment by rescaling $\eta$ and $\gamma$ to be with respect to $\epsilon$ and $\delta$:
\begin{align}
    \red{K_{f} = 2 \log \frac{2|F|\cdot 4^{\qdeg(f)}}{\delta},}
    &&
    \red{N_f = \frac{34 \cdot 16^{|E_{\not \Delta}|} \left( |\kappa| + |\Gamma| \right) \|O\|_{\text{S}}^4}{\epsilon^2} 4^{\deg(f)}}.
\end{align}
{For each fragment, we must build shadows of the corresponding Choi states capable of inferring all quantum degrees of freedom, leading to the factor of \red{$4^{\qdeg(f)}$ in $K_{f}$} for estimating expectations for all \red{length-$\qdeg(f)$} Pauli strings ($M$ in Lemma \ref{thm:shadowtomo}); the $|F|$ factor comes from rescaling the success probability}. Moreover, the size of the operators acting on fragment \red{$f$} is the sum of the size of $O$ on the circuit output of $f_k$ in addition to the quantum degrees of freedom; hence, \red{$4^{\deg(f)}$} factor in \red{$N_{f}$} accounts for total size of the operator acting on the Choi state of \red{$f$}. Upon rearranging, we arrive at the form of Eq.~\eqref{eq:mainthmeq}.
\end{proof}

It is also worth noting that the bound presented is pessimistic in the sense that we took an absolute error from Lemma \ref{thm:shadowtomo} as standard deviation. 
Thus, we can expect the error to be considerably lower in practice so long as the additive error $\eta$ and ``failure'' probability $\gamma$ are small. 

\section{Error scaling in the low-shot regime} \label{sec:opt-fragment}

\begin{figure}[h]
    \begin{subfigure}{0.49\linewidth}
        \includegraphics[width=\linewidth]{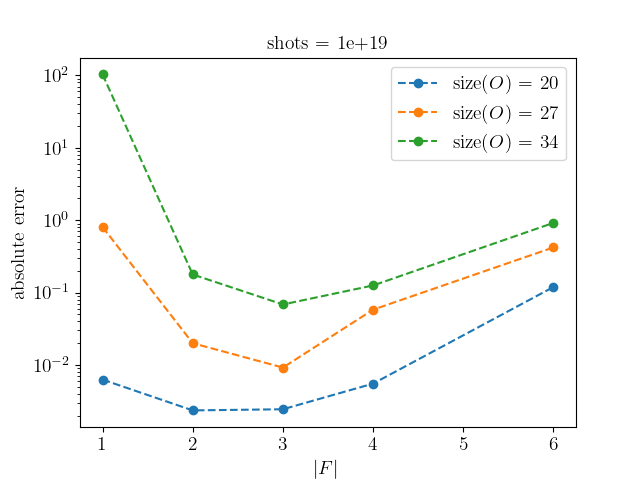}
        \caption{Theoretical scaling {predicted by} Theorem \ref{thm:mainthm} {for} a 36-qubit circuit and a total of $10^{19}$ shots.}
    \end{subfigure}
    \hfill
    \begin{subfigure}{0.49\linewidth}
        \includegraphics[width=\linewidth]{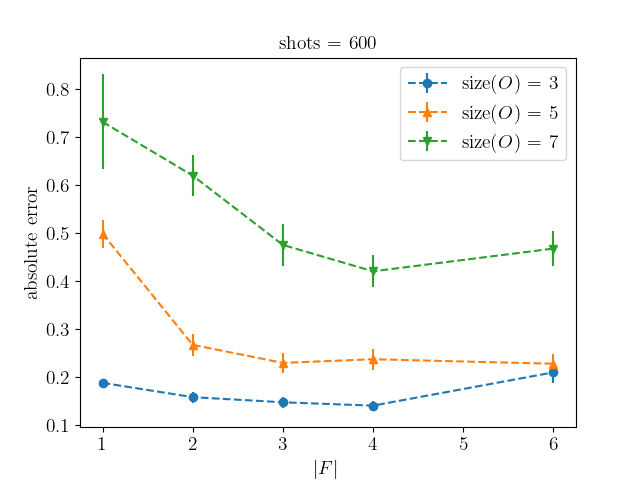}
        \caption{Empirical scaling obtained from simulations of a 12-qubit circuit and taking a total of 600 shots.}
    \end{subfigure}
    \caption{A comparison of \textbf{(left)} theoretical scaling and \textbf{(right)} empirical scaling with extremely low number of shots. We observe similar qualitative behavior where there exists an ``optimal'' number of fragments. Results shown for the cascade circuit in of Figure \ref{fig:cascade}.}
    \label{fig:opt-frag-fig}
\end{figure}

As briefly mentioned in Section \ref{sec:numerics}, the bound provided in Theorem \ref{thm:mainthm} is extremely pessimistic, and requires an unrealistic number of shots to estimate the expectation of any observable to sufficient accuracy. This is likely due to an overly conservative estimation on error propagation during tensor contraction. However, if shot noise is sufficiently high, tensor contraction will exponentially amplify the noise on each fragment. Thus, we should expect an increase in error as the number of fragments increase.

We tried replicating scenarios where shot noise for each fragment is significant by severely limiting the total number of shots. Meanwhile, we study the theoretical scaling in parameter regimes such that splitting large observables is significantly more favorable. As a result, there exhibits a clear separation among choices for the numbers of fragments. We show the results in Figure \ref{fig:opt-frag-fig}, where there is a surprising qualitative resemblance between theory and empirical evidence. Thus, while the theoretical results might provide some intuition about the behavior of circuit cutting in the high-weight observable, low shot count regime, more practical bounds should be developed for accurate prediction of error.

\end{document}